\renewenvironment{shaded}{%
  \MakeFramed{\advance\hsize-\width \FrameRestore\FrameRestore}}%
 {\endMakeFramed}
\definecolor{shadecolor}{gray}{0.75}
\definecolor{citecolor}{HTML}{0000C0}
\definecolor{urlcolor}{HTML}{000080}
\definecolor{lgray}{rgb}{0.95,0.95,0.95}
\newcommand{\system}{{\sc{Kevin}}\xspace}
\def\update{\rho}
\def\reconf{\delta}
\def\resrv{r}
\DeclareMathOperator*{\dist}{dist_{DB}}
\newcommand{\da}{\texttt{DA}\xspace}
\newcommand{\debruijn}{de Bruijn\xspace}
\newtheorem{observation}{Observation}
\newtheorem{claim}{Claim}
\title{Kevin: de Bruijn-based topology with demand-aware links and greedy routing}
\author{Johannes Zerwas}
\affiliation{%
  \institution{TU Munich}
  \country{Germany}
}
 \author{Csaba Gy{\"o}rgyi}
 \affiliation{%
   University of Vienna and
   ELTE Eötvös Loránd University
   \country{Austria and Hungary}
}
 \author{Andreas Blenk}
 \affiliation{%
   \institution{TU Munich}
   \country{Germany}
  }
 \author{Stefan Schmid}
 \affiliation{%
   \institution{TU Berlin and University of Vienna}
   \country{Germany and Austria}
}
 \author{Chen Avin}
 \affiliation{%
   \institution{Ben-Gurion University}
   \country{Israel}
}
\date{}
\begin{document}

\sloppy 

\begin{abstract}
We propose \system, a novel demand-aware reconfigurable rack-to-rack datacenter
network realized with a simple and efficient control plane. In particular, \system~makes
effective use of the network capacity by supporting integrated and multi-hop  
routing as well as work-conserving scheduling. To this end, \system~relies on local greedy routing with small forwarding tables
which require local updates only 
during topological reconfigurations, making this approach ideal for dynamic networks. 
Specifically, 
\system is based on a de Bruijn topology (using a small number of optical circuit switches) in which static links are enhanced with 
opportunistic links.
\end{abstract}
\settopmatter{printfolios=true}

\maketitle

\section{Introduction}

The performance of many cloud applications, e.g., related to distributed machine learning, batch processing, or streaming, critically depends on the bandwidth capacity of the underlying network.
High network throughput requirements are also introduced by today's trend of resource disaggregation in datacenters, where fast access to remote resources (e.g., GPUs or memory) is critical for the overall system performance~\cite{talk-about,li2019hpcc}. Accordingly, over the last years, great efforts have been made to improve the throughput of datacenter networks~\cite{bcube,singla2012jellyfish,jupiter,AlFares2008}.

Emerging optical technologies enable a particularly innovative approach to improve datacenter performance, by supporting dynamic reconfigurations of the physical network topology~\cite{ballani2020sirius,zhou2012mirror,kandula2009flyways,rotornet,opera,helios,firefly,megaswitch,quartz,osa,projector,cthrough,splaynets,venkatakrishnan2018costly,schwartz2019online,proteus,100times,fleet,zhang2021gemini}. In particular, optical circuit switches allow to provide dynamic connectivity in the form of \emph{matchings}~\cite{sigmetrics22cerberus,infocom22lazy}. Reconfigurable datacenter networks (RDCNs) use such switches to establish topological shortcuts (i.e., shorter paths) between racks, hence utilizing available bandwidth capacity more efficiently and improving throughput~\cite{rotornet,opera,sigmetrics22cerberus}.

Reconfigurable datacenter networks come in two flavors: oblivious and demand-aware~\cite{ccr18san,osn21}. Oblivious RDCNs such as RotorNet~\cite{rotornet}, Opera~\cite{opera}, and Sirius~\cite{sirius}, rely on quickly and periodically changing interconnects between racks, to emulate a complete graph. Such emulation was shown to provide high throughput and is particularly well-suited for all-to-all traffic patterns~\cite{sigmetrics22cerberus}.
In contrast, demand-aware RDCNs allow to \emph{optimize} the topological shortcuts, depending on the traffic pattern. Demand-aware networks such as ProjecToR~\cite{projector}, Gemini~\cite{zhang2021gemini}, or Cerberus~\cite{sigmetrics22cerberus}, among many others~\cite{zhou2012mirror,kandula2009flyways,firefly,osa,projector,100times,fleet,dan,flexspander}, are attractive since datacenter traffic typically features much temporal and spatial structure: traffic is bursty and skewed, and a large fraction of communicated bytes belong to a small number of elephant flows~\cite{tracecomplexity,benson2010network,projector,osa,datacenter_burstiness,DBLP:journals/cn/ZouW0HCLXH14}. By adjusting the datacenter topology to support such flows, e.g., by providing direct connectivity between intensively communicating source and destination racks, network throughput can be increased further (even if done infrequently~\cite{zhang2021gemini}). 

However, the operation of reconfigurable datacenter networks comes with overheads and limitations. 
In general, existing RDCNs typically rely on a hybrid topology which combines static (electrical) and dynamic (optical) parts. While such a combination is often very powerful~\cite{projector}, current architectures support only  fairly restricted routing. First, communication on the (dynamic) optical topology is often limited to a one or two hops, constraining the possible path diversity,  and hence capacity, of the optical network~\cite{ancs18,projector,sirius,zhang2021gemini,sigmetrics22cerberus}. Furthermore, routing is usually \emph{segregated}: flows are either only forwarded along the static or the dynamic network, but not a combination of both~\cite{projector,sigmetrics22cerberus,taleoftwo}.
The restriction to segregated routing also entails overheads as it requires significant buffering while the reconfigurable links are not available. As static links are always available for packet forwarding in hybrid datacenter networks, this segregation is also not work conserving.
Demand-aware RDCNs may introduce additional delays as they require potentially time-consuming optimizations. 

This paper is motivated by the desire to overcome these limitations, and 
to better exploit the available link resources, by supporting a general \emph{multi-hop} and \emph{integrated} (i.e., non-segregated) routing. Specifically, we envision a datacenter network in which packets can be forwarded in a work-conserving manner, along \emph{any} available link, be it static or dynamic, and in which a routing path can combine both link types.   
Such an integrated and work-conserving routing also has the potential to avoid long buffering times and hence delays: if a reconfigurable link is currently unavailable, packets can directly be forwarded to the other available (static) links. 
However, going beyond segregated and 1- or 2-hop routing, requires a novel network control plane: traditional routing protocols based on shortest paths are not designed for highly dynamic topologies and the frequent recomputation of routes can become infeasible~\cite{francois2005achieving}. Furthermore, to keep update cost low and provide a high scalability, it is desirable to have small forwarding tables.  

\begin{table}[]
    \vspace{-10mm}
    \footnotesize
    \centering
    \begin{tabular}{|m{1.2cm}|>{\centering\arraybackslash}m{.78cm}|>{\centering\arraybackslash}m{.78cm}|>{\centering\arraybackslash}m{.78cm}|>{\centering\arraybackslash}m{.79cm}|>{\centering\arraybackslash}m{.78cm}||>{\centering\arraybackslash}m{.78cm}|}
         \multicolumn{1}{c}{} & \multicolumn{1}{c}{\rotatebox{90}{Xpander~\cite{xpander}}} & \multicolumn{1}{c}{\rotatebox{90}{Sirius~\cite{sirius}}} & \multicolumn{1}{c}{\rotatebox{90}{ProjecToR~\cite{projector}}} & \multicolumn{1}{c}{\rotatebox{90}{Gemini~\cite{zhang2021gemini}}} & \multicolumn{1}{c}{\rotatebox{90}{Cerberus~\cite{sigmetrics22cerberus}}} & \multicolumn{1}{c}{\system} \\
    \hline
         Integrated Multi-hop & \cellcolor{green!15} Yes & \cellcolor{red!15} 2-hops & \cellcolor{red!15} 1 hop & \cellcolor{red!15} 2-hops & \cellcolor{red!15} 1-hop & \cellcolor{green!15} \bf Yes \\
    \hline
         Demand-aware & \cellcolor{red!15} No & \cellcolor{red!15} No & \cellcolor{green!15} Yes & \cellcolor{green!15} Yes & \cellcolor{green!15} Yes  & \cellcolor{green!15} \bf Yes \\
    \hline
         Work \newline Conserving & \cellcolor{green!15} Yes & \cellcolor{red!15} No & \cellcolor{red!15} No & \cellcolor{green!15} Yes & \cellcolor{red!15} No & \cellcolor{green!15} \bf Yes \\
    \hline
         Topology Update & \cellcolor{green!15} None & \cellcolor{green!15} Fast & \cellcolor{green!15} Fast & \cellcolor{red!15} Slow & \cellcolor{green!15} Fast  & \cellcolor{green!15} \bf Fast \\
    \hline
         Routing \& Control & \cellcolor{green!15} Simple & TBD & TBD & \cellcolor{green!15} Simple & TBD & \cellcolor{green!15} \bf Simple\\
    \hline
    \end{tabular}
    \caption{Recent (R)DCN designs and their properties.}
    \label{tab:recent}
    \vspace{-5mm}
\end{table}

To this end, we propose a simpler and more efficient control plane for RDCNs which avoids flow forwarding delays by supporting
\emph{local and greedy integrated routing}:
the forwarding rules depend on local information only, i.e., the set of direct neighbors as well as information in the packet header (in particular, the destination); they are hence not affected by topological changes in other parts in the network and do not have to be updated under reconfigurations. This can significantly reduce control plane overheads during topological adjustments, maintaining a simple routing and control, and is hence well-suited for highly dynamic networks. As we will show, the greedy routing approach is also compact and only requires small forwarding tables.  

In particular, we present \system, a novel demand-aware reconfigurable datacenter
network which leverages such a local control plane using a \debruijn topology (built from a small number of optical switches), 
in which static links are enhanced with 
opportunistic links.
\system uses logical addressing, and is based on a receiver-based approach for
the efficient detection of elephant flows as well as the local and collision-free scheduling
of demand-aware links.
The control plane of \system~can be realized both using centralized or distributed algorithms. In both cases, 
it reduces buffer requirements and delays, supports very small forwarding tables based on standard longest common prefix matching,  and
enables fast and local route updates as well as short path lengths.
\system is well-suited to be realized using the Sirius~\cite{sirius} architecture.   
In summary, we make the following \textbf{\emph{contribution}}:

\begin{itemize}[leftmargin=*]
  \item We present \system, a novel and pratical reconfigurable datacenter architecture which supports efficient multi-hop, integrated and work-conserving routing, to push the performance limits in datacenter networks.  
  The simplicity of \system relies on the observation that adding shortcuts to a static \debruijn topology allows to continue supporting greedy local routing. 
\end{itemize}


\section{Putting \system into Perspective}\label{sec:motivation}

To put \system into perspective with the most recent proposals for datacenter network designs, we use Table~\ref{tab:recent}.
We first consider Xpander~\cite{xpander}, a state-of-the-art \emph{static and demand-oblivious} topology, which is based on expander graphs. While Xpander has many attractive properties, according to recent results (including the ones in this work), we expect that reconfigurable datacenter networks (i.e., based on dynamic topologies) can provide an improved performance.

We next consider Sirius~\cite{sirius}, a recent proposal by authors from Microsoft, as an example of a \emph{dynamic and demand-oblivious} topology (similar to~\cite{rotornet,opera}). Such topologies  have been shown to be very effective as well, but still have several potential deficits. First and foremost, they do not feature demand-aware links: while the role and use of demand-aware topology components in future datacenters is generally still subject to ongoing discussions, empirical studies show that demand-awareness can improve throughput under today's typical skewed workload distributions~\cite{sigmetrics22cerberus,zhang2021gemini}.
Furthermore, current dynamic and demand-oblivious designs are limited to at most 2-hop routing on dynamic links, and are not work conserving. There also remain some open questions regarding the complexity of the control and routing of these systems. 

Then there are also systems which are dynamic and demand-aware. Systems like ProjecToR~\cite{projector} use a combination of demand-aware optical and electric switches, but do not support integrated mutli-hop routing (ProjecToR uses only 1-hop on demand-aware links), are not work conserving, and the control complexity is not fully determined.  Gemini~\cite{zhang2021gemini}, a recent proposal by authors from Google, makes the case for demand-aware links in production level datacenters, but it currently implements only infrequent topology updates (about once a day). Lastly, Cerberus relies on a combination of three topologies: static, dynamic oblivious, and dynamic demand-aware, which together, potentially provide higher throughput. However, it supports only 1-hop routing on the demand-aware links, and its control and routing mechanisms are left abstract and require further investigation.

\begin{shaded}
In contrast to all the above systems, \system features all the desired properties  listed in Table \ref{tab:recent}. It supports \emph{integrated multi-hop} routing as in Xpander, it uses \emph{demand-aware} links as ProjecToR, it is \emph{work conserving} as Gemini, it enables fast \emph{topology updates} as in Cerberus, and it is based on simple control and routing.
\end{shaded}

To put the novelty of our contribution into perspective, we note that the overheads and limitations of shortest path routing has already been studied in several contexts, including dynamic and mobile networks such as ad-hoc networks where greedy approaches such as geo-routing can be an attractive alternative~\cite{stojmenovic2002position}. 
We are also not the first to observe the benefits of \debruijn based networks in dynamic settings, and there exist peer-to-peer~\cite{scheideler2009distributed,naor2007novel,fraigniaud2006d2b,kaashoek2003koorde} and parallel architectures~\cite{louri1995optical} which rely on \debruijn graphs.
However, to the best of our knowledge, we are the first to study such an approach in the context of reconfigurable and demand-aware datacenter networks. 

\begin{figure}
    \centering
    \includegraphics[width=\linewidth]{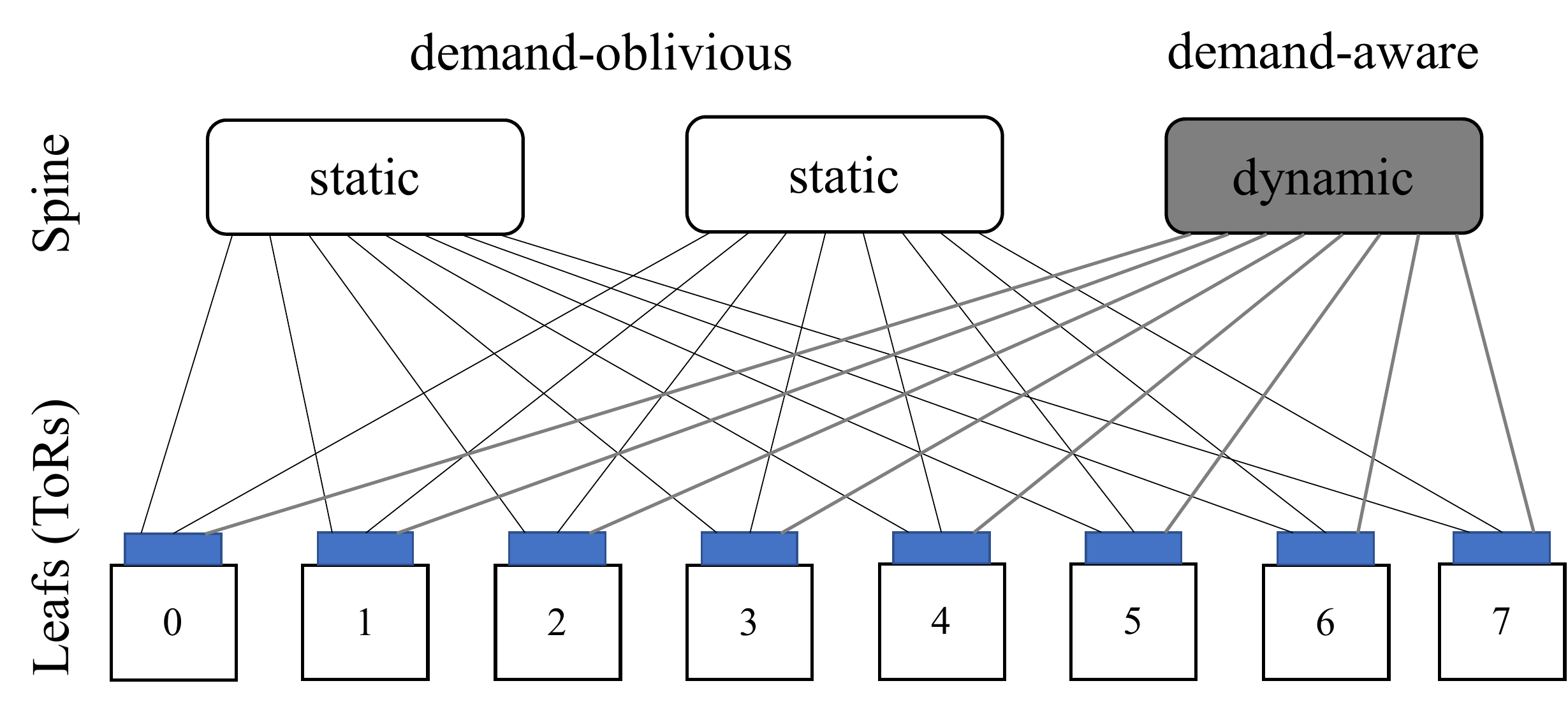}
    \caption{TMT network example with eight ToR switches and three spine switches from which two are static matchings and one is dynamic matchings.
    }
    \label{fig:deisgn:tmt_network}
    \vspace{-0.3cm}
\end{figure}

\section{The \system RDCN}\label{sec:duo}

The rack-to-rack network provided by \system~is based on the
 ToR-Matching-ToR (TMT) model~\cite{rotornet,sigmetrics22cerberus}: $n$ top-of-rack (leaf) switches are interconnected by a set of $k$ optical spine switches. Each spine switch provides a $n \times n$ directed matching between its input-output ports.
 Depending on the switch type, the matching can dynamically change over time.
In particular, \system~is hybrid, in the sense that one part of
the topology model is demand-oblivious and static using $k_s$ spine switches (i.e., static matchings),  and the other part is
dynamic and demand-aware using $k_d$ reconfigurable spine switches (i.e., dynamic matchings), and $k=k_s+k_d$.
Figure~\ref{fig:deisgn:tmt_network}, presents an example of the TMT model with eight ToR switches and three spine switches, from which two are static and one is dynamic.
Each ToR-spine link in the figure represents one directed uplink and one directed downlink. It is important to note that abstractly, we use $k$ spine switches, 
each implementing an $n \times n$ matching, but each matching can be split across a set of several smaller switches, like in Sirius~\cite{sirius}.

In order to maximize performance, 
\system~uses dynamic, demand-aware links to provide shorter paths for elephant flows, while other flows are transmitted via the combined (static + dynamic) topology.
A key feature of \system~is that it supports \emph{integrated multi-hop} routing
across \emph{both} switch types. This is in contrast to previous works that rely on \emph{segregated} and single-hop forwarding for demand-aware links~\cite{projector,sigmetrics22cerberus}. Moreover routing in \system\ is \emph{efficient}, by relying on logical addressing and a local control plane, implementing greedy routing. Thus, links can always be used immediately, with a work-conserving scheduler. 
To detect elephant flows, \system~leverages a simple sketch, sampling the flow sizes
and then adjusting flow paths accordingly. 
In the following, we present the different components of \system~in detail.

\subsection{The Hybrid Topology}

\system~combines two topologies, a static, demand-oblivious topology (the ``\emph{backbone}''), and a dynamic, opportunistic, demand-aware topology into a unified one. 
Both topologies are built from matching switches according to the TMT model, forming a augmented \debruijn network~\cite{deBruijn}.

\begin{itemize}[leftmargin=*]
 \item \textbf{Static and demand-oblivious \debruijn topology (backbone):} 
 The static topology of \system~relies on a \debruijn graph. 
 It is formed by $k_s$ static optical circuit switches or patch panels. 
 \item \textbf{Demand-aware topology:} The static topology is enhanced
 by $k_d$ reconfigurable matchings, also implemented with optical circuit switches.
 The demand-aware (\da) links add \emph{shortcuts} on top of the static \debruijn topology.
\end{itemize}

\begin{figure}
    \centering
    \includegraphics[width=0.9\linewidth]{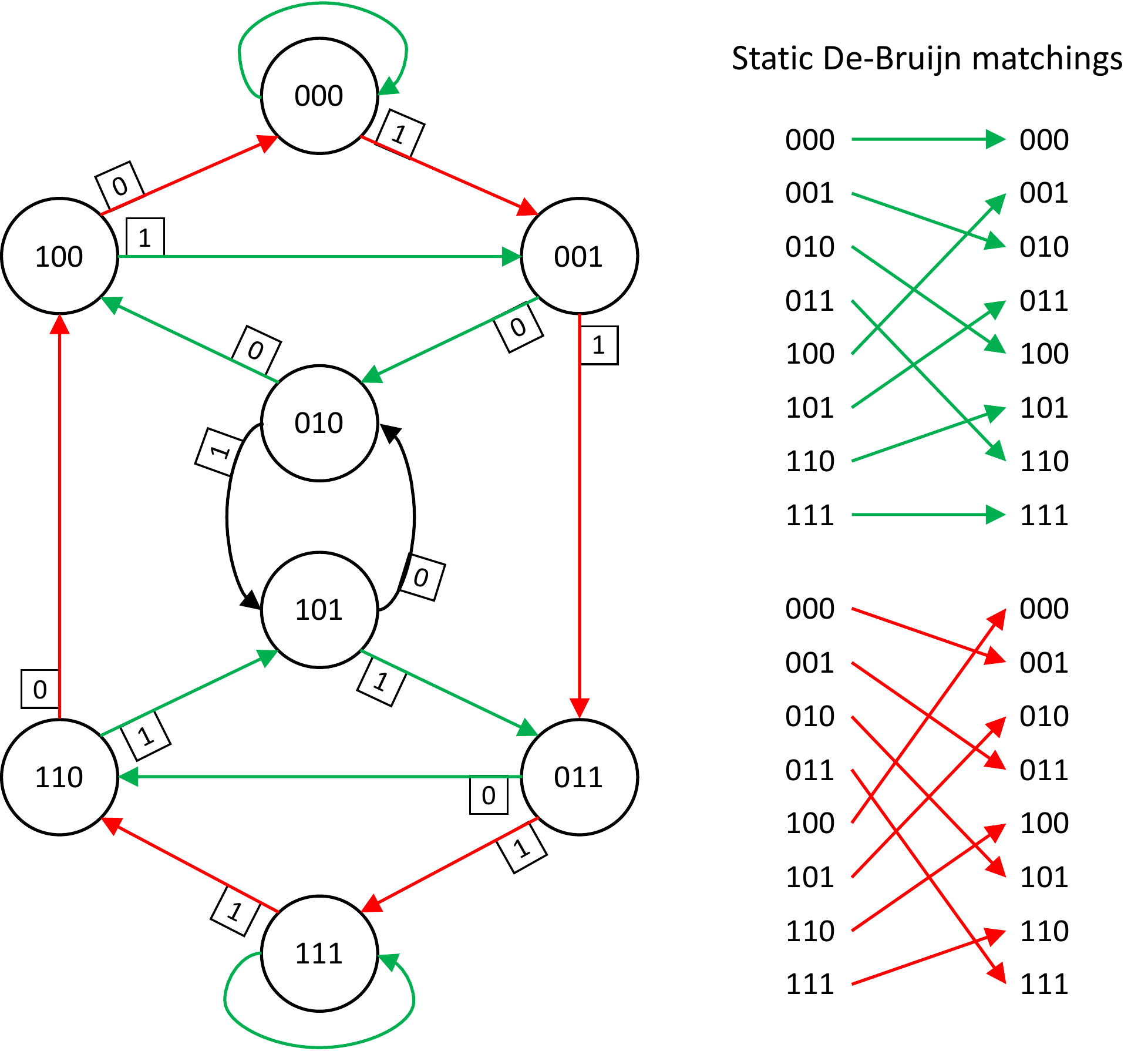}
    \caption{A $DB(2,3)$ static \& directed \debruijn graph with eight ToRs and its two corresponding matchings (colored in green and red). Each port (edge) is labeled $0$ or $1$ according to the performed shift operation.} 
    \label{fig:design:debruijn}
\end{figure}

We first discuss the static \debruijn topology, and how 
it can be built from a constant number of matchings (already two matchings suffice). 

\subsubsection{The Static \debruijn Topology}

We start with formally defining the \debruijn topology~\cite{leighton2014introduction}. For $i \in \mathbb{N}$, let $[i]=\{0, 1, \dots, i\}$.

\begin{definition}[\debruijn topology]
For integers $b,d >1$, the {\em $b$-ary \debruijn graph of dimension $d$}, $DB(b,d)$, is a directed graph $G=(V,E)$ with $n=b^d$ nodes and $m=b^{d+1}$ directed edges. The node set $V$ is defined as $V=\{v \in [b-1]^d \}$, 
i.e.,  $v=(v_1,\ldots,v_d), v_i \in [b-1]$, 
and the directed edge set $E$ is:
\begin{align}\label{eq:dedge}
   \{v,w\} \in E \Leftrightarrow w \in \{ (v_2,\ldots,v_{d},x): \; x \in [b-1] \}
\end{align}
\end{definition}

Note that the directed neighbors of node $v$ are determined by a left \emph{shift} operation on the address of $v$ and entering a new symbol  $x \in [b-1]$ as the 
right most (least significant) symbol.
It is well known that the \debruijn topology has the following properties:
\begin{enumerate}[leftmargin=*]
    \item \label{itm:reg} Considering self-loops, $DB(b,d)$ is a $b$-regular directed graph
    \item $DB(b,d)$ supports greedy routing with paths of length at most $d$
\end{enumerate}

The following observation will be relevant for our network design,  
it is a consequence of Property \eqref{itm:reg} above and Hall's theorem~\cite{hall1935}:
\begin{observation}\label{obs:union}
A $DB(b,d)$ topology can be constructed from the union of $b$ directed perfect matchings.
\end{observation}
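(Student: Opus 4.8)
The plan is to reduce the statement to the classical fact that every $b$-regular bipartite graph decomposes into $b$ perfect matchings, which is precisely the form in which Hall's theorem (equivalently, K\"onig's edge-coloring theorem) is being invoked here. First I would record that $DB(b,d)$ is not only $b$-out-regular but also $b$-\emph{in}-regular: a node $w=(w_1,\ldots,w_d)$ is the head of an edge $(v,w)$ satisfying \eqref{eq:dedge} exactly when $v=(a,w_1,\ldots,w_{d-1})$ for some $a\in[b-1]$, so $w$ has exactly $b$ in-neighbors. Together with the out-degree count from Property~\eqref{itm:reg}, this two-sided regularity (counting self-loops) is what makes the reduction go through.

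Next I would form the bipartite \emph{split} graph $H=(L\cup R,F)$, where $L$ and $R$ are two disjoint copies of $V$, and $\{v_L,w_R\}\in F$ for every directed edge $(v,w)\in E$; a self-loop $(v,v)$ simply becomes the ordinary bipartite edge $\{v_L,v_R\}$. Since $DB(b,d)$ has out-degree and in-degree $b$ at every node, $H$ is $b$-regular. The key correspondence is that a perfect matching of $H$ picks, for each node, exactly one outgoing and exactly one incoming directed edge of $DB(b,d)$, i.e.\ it is a directed perfect matching (a permutation of $V$). Hence a partition of $F$ into $b$ perfect matchings of $H$ translates directly into the desired partition of $E$ into $b$ directed perfect matchings.

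The core step is then routine: a nonempty $b$-regular bipartite graph satisfies Hall's condition, since for any $S\subseteq L$ the $b|S|$ edges leaving $S$ land in $N(S)$ and each vertex of $R$ carries at most $b$ edges, giving $b|S|\le b|N(S)|$ and so $|N(S)|\ge|S|$; thus $H$ admits a perfect matching $M_1$. Removing $M_1$ leaves a $(b-1)$-regular bipartite graph, and iterating peels off $M_1,\ldots,M_b$ that partition $F$. Mapping each $M_i$ back to $DB(b,d)$ yields the claim.

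The one place that deserves care, and which I expect to be the only real obstacle, is resisting the tempting ``explicit'' decomposition that groups edges by the newly entered symbol $x$, i.e.\ $M_x=\{(v,w): w=(v_2,\ldots,v_d,x)\}$. For $d>1$ this is \emph{not} a matching: the map $v\mapsto(v_2,\ldots,v_d,x)$ forgets the leading coordinate $v_1$, so the $b$ nodes agreeing on $(v_2,\ldots,v_d)$ all send their $x$-edge to the same target, violating in-degree~$1$. Recognizing that this symbol-based grouping fails is what forces the bipartite reformulation, after which Hall's theorem does the remaining work mechanically, with self-loops requiring no special treatment.
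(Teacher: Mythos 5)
Your proof is correct and takes essentially the same route as the paper: the paper derives the observation directly from Property~\eqref{itm:reg} ($b$-regularity counting self-loops) together with Hall's theorem, which is exactly the bipartite split-graph argument you spell out in detail (two copies of $V$, edges $\{v_L,w_R\}$ for each directed edge, iterated extraction of perfect matchings). Your closing remark that the shift-symbol grouping $M_x=\{(v,w): w=(v_2,\ldots,v_d,x)\}$ is \emph{not} a matching is accurate and a useful caution, but it does not change the fact that the underlying argument coincides with the paper's.
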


Figure~\ref{fig:design:debruijn} demonstrates the $DB(2,3)$ \debruijn topology with $8=2^3$ nodes (ToRs) and two matchings (colored in green and red) that can be combined to create it.
Each node in the topology has two outgoing and two incoming directed links (including self loops). 
The figure shows the \emph{labeled} version of the graph where 
each edge (or a node outgoing port) is labeled with $0$ or $1$ according to the shift operation implied by Eq. \eqref{eq:dedge}.

It follows from Observation~\ref{obs:union} that we can build a 
$DB(k_s,d)$ topology with $k_s$ static spine switches.

\begin{algorithm}[t]
\caption{Building the DB Forwarding Table}
\label{alg:forwarding:merge}
\SetKwProg{merge}{Function \emph{BuildTable}}{}{end}
\merge{}{
    \For{each neighbor $z_1z_2,z_3$ at port $p$}
    {Add the following entries to the table
    \begin{tabular}{c|c|c}
        Prefix & Port & Path-length \\\hline
        $z_3**$ & $p$ & $3$  \\
        $z_2z_3*$ & $p$ & $2$  \\
        $z_1z_2z_3$ & $p$ & $1$  \\\hline
    \end{tabular}
    }
    Reduce the forwarding table according to \emph{LPM}
}
\vspace{-0.2cm}
\end{algorithm}

\begin{figure}[t]
\begin{tabular}{cc}
\begin{minipage}{0.45\columnwidth}
    \centering
    \begin{tabular}{c|c|c}
        Prefix & Port & Path-len \\\hline
        \multicolumn{3}{c}{(neighbor 110 on port 0)}\\
        $0**$ & $0$ & $3$  \\
        $10*$ & $0$ & $2$\\
        $110$ & $0$ & $1$ \\
        \multicolumn{3}{c}{(neighbor 111 on port 1)}\\
        $1**$ & $1$ & $3$  \\
        $11*$ & $1$ & $2$\\
        $111$ & $1$ & $1$
    \end{tabular}
\end{minipage}
&
\begin{minipage}{0.45\columnwidth}
\centering
    Reduced Table for ToR $011$
    \begin{tabular}{c|c|c}
        Prefix & Port & Path-len \\\hline
        $0**$ & $0$ & $3$  \\
        $10*$ & $0$ & $2$  \\
        $110$ & $0$ & $1$  \\
        $111$ & $1$ & $1$  \\\hline
        $011$ & Local & $0$
    \end{tabular}
\end{minipage} \\
(a) Neighbors' entries & (b) Reduced table
\end{tabular}
\caption{The results of building the forwarding table (Algorithm~\ref{alg:forwarding:merge}) of node 011 with neighbors 110 and 111 on the static $DB(2,3)$ \debruijn graph.}
\label{fig:design:example_tables}
\end{figure}

\begin{figure*}[t!]
\centering
\begin{tabular}{ccc}
\begin{minipage}{0.35\linewidth}
\includegraphics[width=\linewidth]{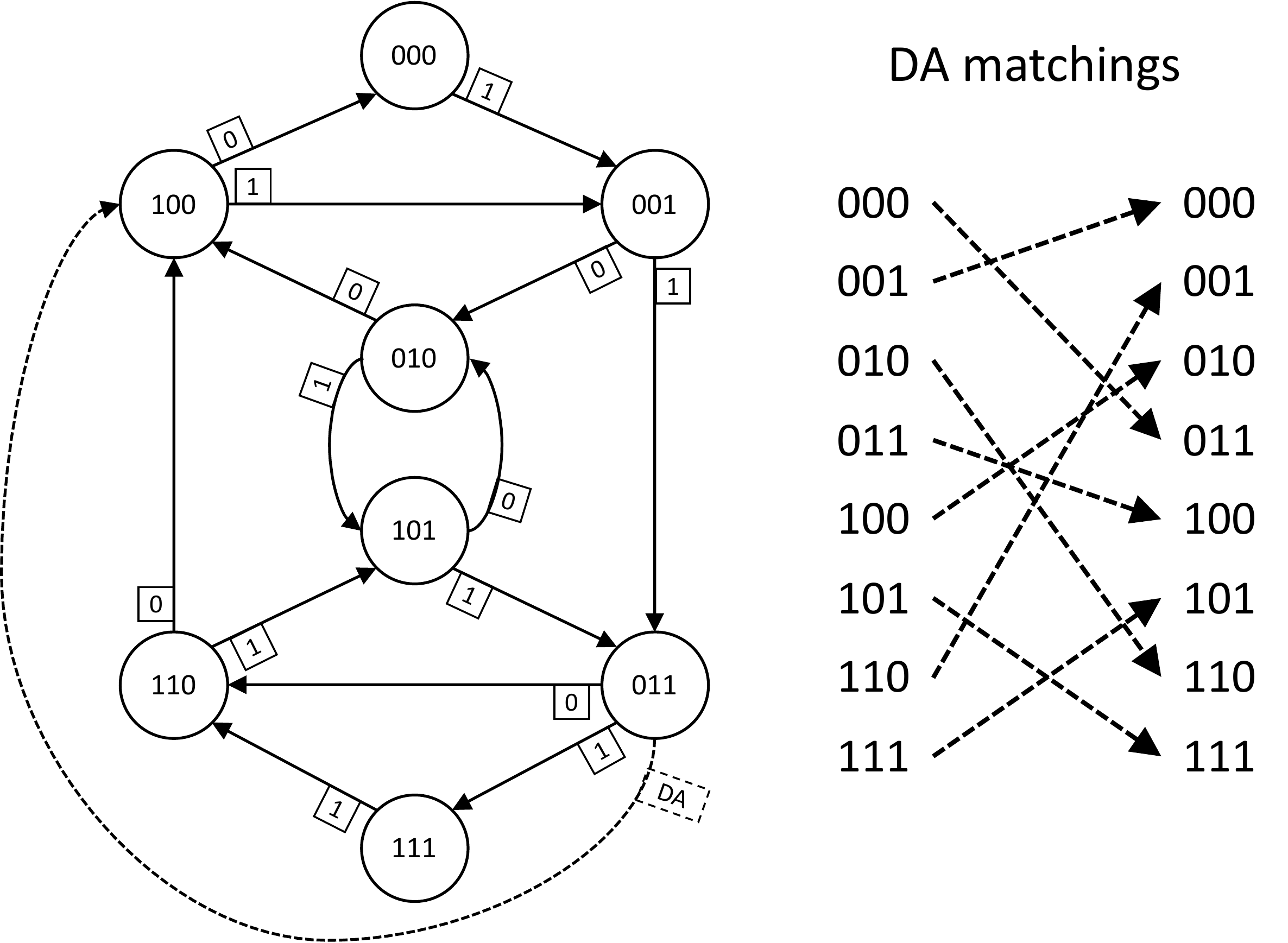}
\end{minipage}
&
\begin{minipage}{0.22\linewidth}
    \centering
    \begin{tabular}{c|c|c}
        Prefix & Port & Path-len \\\hline
        $0**$ & $DA$ & $3$  \\
        $00*$ & $DA$ & $2$\\
        $100$ & $DA$ & $1$
    \end{tabular}
\end{minipage}
&
\begin{minipage}{0.34\linewidth}
    \centering
    \begin{tabular}{l|c|c|l}
        Prefix & Port & Path-len & IP\\\hline
        $0**$ & $\{0, DA\}$ & $3$ & $10.0.0.0/9$ \\
        $00*$ & $DA$ & $2$ & $10.0.0.0/10$ \\
        $10*$ & $0$ & $2$ & $10.128.0.0/10$ \\
        $100$ & $DA$ & $1$ & $10.128.0.0/11$ \\
        $110$ & $0$ & $1$ & $10.192.0.0/11$ \\
        $111$ & $1$ & $1$ & $10.224.0.0/11$ \\
        \hline
        $011$ & Local & $0$ & $10.96.0.0/11$
    \end{tabular}
\end{minipage}\\
(a) ToR $011$ with new DA link to $100$. & (b) Entries from $100$. & (c) Reduced table on ToR $011$.
\end{tabular}
\caption{The new forwarding tables of ToR $011$ after the establishment of the \da-link from $011$ to $100$.}
\label{fig:design:dynamic_tables}
\end{figure*}

\subsubsection{Greedy and LPM Routing in \debruijn Topology}

It is well known that the \debruijn topology supports greedy routing from a source $s$ to a destination $t$ based solely on the address of $t$. That is, to choose the next-hop toward $t$ each node on the route needs to know the address of $t$ and the address of its neighbors. The next-hop is chosen as the neighbor which minimizes the \emph{\debruijn distance} to $t$.
The \debruijn distance between two nodes $v,w$ denoted as $\dist(v,w)$ is the minimum number of \emph{shift} operations needed to transform $v$ address to $w$ address. 
The main observation is  that each such shift implies a directed edge and the next-hop in the routing. For example, the \debruijn distance between node $s=011$ and $t=001$ is $\dist(s,t)=3$ and route from $s$ to $t$ in $DB(2,3)$ is $011 \rightarrow 110 \rightarrow 100 \rightarrow 001$ (see also Figure~\ref{fig:design:debruijn}). Note that in each hop the distance to $t$ is reducing.

A less-known fact is that routing on the \debruijn topology can 
be realized via a simple forwarding table that is based on a \emph{longest prefix match} (LPM)~\cite{durr2016flat}. 
To build the forwarding table for a node $v$ it only needs to know the address of each neighbor $w$ and the outgoing port $p$ that connects to it. 
Algorithm~\ref{alg:forwarding:merge}
describes the forwarding table building (for simplicity only for the $DB(2,3)$ case) and Figure~\ref{fig:design:example_tables} shows the forwarding table of node $011$ and how it is built from its neighbors $110$ and $111$. 
Note that rule $1**$, is removed from the table in the \emph{reduce} process since it will never be used.

Following Algorithm~\ref{alg:forwarding:merge}, we can state the following about the \emph{size} of the forwarding table of each node:
\begin{observation}
The longest prefix match forwarding table size of each node in a $DB(b,d)$ topology has at most $b d = O(b \log_b n)$ entries.
\end{observation}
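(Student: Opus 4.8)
The plan is to bound the table size \emph{before} the \emph{reduce} step of Algorithm~\ref{alg:forwarding:merge}, and then observe that the reduction can only shrink it. First I would generalize the rule displayed for the $DB(2,3)$ case to an arbitrary $DB(b,d)$: for each neighbor $w=(w_1,\ldots,w_d)$ reached on a port $p$, the algorithm inserts exactly one prefix per admissible path length $\ell\in\{1,\ldots,d\}$, namely the length-$(d-\ell+1)$ suffix $(w_\ell,\ldots,w_d)$ of $w$ (used as a destination prefix) pointing to port $p$. Hence every neighbor contributes exactly $d$ entries. By the $b$-regularity noted in Property~\eqref{itm:reg} (counting self-loops), each node has exactly $b$ out-neighbors and thus $b$ ports, so the algorithm emits $b\cdot d$ entries in total. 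Moreover, any two of the $b$ neighbors differ in their last symbol, and every one of these prefixes retains that last symbol $w_d$ in its final position; therefore all $bd$ prefixes are pairwise distinct and the unreduced table has exactly $bd$ forwarding entries.

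The core of the argument is that the \emph{reduce} step is monotonically non-increasing in the number of entries: LPM reduction only \emph{deletes} prefixes that are never the unique longest match for any destination, and never creates new rules. Consequently the reduced forwarding table has at most $bd$ entries. Finally, since $n=b^d$ gives $d=\log_b n$, this is $bd=b\log_b n=O(b\log_b n)$, as claimed.

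The step I expect to be the main obstacle is the bookkeeping around the \emph{local} (self-delivery) rule, which is added on top of the $bd$ forwarding entries and could naively push the count to $bd+1$ for a node $v$ that has no self-loop. To close this I would show the reduction always deletes at least one forwarding entry. Consider the neighbor $w^\star=(v_2,\ldots,v_d,v_d)$ obtained by re-inserting $v$'s own last symbol: its path-length-$d$ entry is the length-$1$ prefix $(v_d)$, and its path-length-$(d-1)$ entry is the length-$2$ prefix $(v_d,v_d)$, \emph{both pointing to the same port} $p^\star$. Since $(v_d)$ is the immediate shorter prefix of $(v_d,v_d)$ and carries the same action, the longer one is subsumed under LPM and is dropped, so the reduced table has at most $bd-1$ forwarding entries and the extra local rule keeps the total at $bd$. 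For the degenerate case $d=1$ the analogous role is played by the self-loop entry, which every node possesses and which collapses into the local rule, so the bound $bd$ is preserved in all cases.
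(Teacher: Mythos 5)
Your core counting is exactly the paper's (implicit) argument: Algorithm~\ref{alg:forwarding:merge} emits one prefix per path length $\ell\in\{1,\dots,d\}$ for each neighbor, there are $b$ neighbors by Property~\eqref{itm:reg}, the reduce step only deletes entries, and $d=\log_b n$. The paper states the observation directly from this count, so up to that point you and the paper coincide.

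The problem is in your treatment of the local rule, which is also the only place you go beyond the paper. Your argument there is internally inconsistent: you first (correctly) characterize the reduce step as deleting prefixes that are never the longest match for any destination, but you then justify deleting $(v_d,v_d)$ by a different rule, namely merging a longer prefix into a shorter one with the same output port. Under your own stated criterion, the entry $(v_d,v_d)$ can survive the reduction. Concretely, in $DB(2,4)$ take $v=0101$, with neighbors $1010$ (port $0$) and $1011$ (port $1$). Then $w^\star=1011$, and your entry $(v_d,v_d)=11{*}{*}$ is the \emph{unique longest match} for the destinations $1100,1101,1110,1111$: the only length-$3$ prefixes in the table are $010{*}$ and $011{*}$, and the exact entries are $1010$ and $1011$, none of which match. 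So the reduction keeps $11{*}{*}$. Nor can you fall back on same-port merging: $11{*}{*}$ carries path length $3$ while $1{*}{*}{*}$ carries path length $4$ (and $3$ is the true distance, via $0101\rightarrow 1011\rightarrow 0110\rightarrow 1100$), and the paper's reduce evidently preserves this field --- Figure~\ref{fig:design:example_tables}(b) retains both $0{*}{*}$ (port $0$, length $3$) and $10{*}$ (port $0$, length $2$), which same-port merging would have collapsed. The repair is to run your subsumption in the opposite direction: it is the length-$1$ entry $(v_d)$ that is always deletable, because every neighbor $(v_2,\dots,v_d,x)$ contributes the length-$2$ entry $(w_{d-1},w_d)=(v_d,x)$, and these $b$ entries jointly cover every address beginning with $v_d$; hence $(v_d)$ is never a longest match whenever $d\ge 2$, which the definition guarantees. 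This leaves at most $bd-1$ neighbor-derived entries after reduction, so the table stays at most $bd$ even after adding the local rule, and the rest of your write-up goes through unchanged.
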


We can now discuss the \da links and how they are merged into the hybrid topology.

\subsubsection{The Demand-aware Topology}

The simplicity of \system relies on the observation that adding shortcuts to the static \debruijn topology is easy and allows to continue supporting greedy and LPM routing. 
Recall that in our model we have $k_d$ switches or matchings for \da links. For now consider these $k_d n$ links as arbitrary links.
Later we discuss how to choose these links based on flow sizes.

Let $G=DB(b,d)$ be a \debruijn topology over the set $V$ of nodes.
Let $M$ be a directed matching on $V \times V$.
Let $H=G \cup M$ be the union of the directed graphs $G$ and $M$ with the same set of nodes $V$. We can claim the following about $H$.

\begin{claim}\label{clm:hybrid}
If we perform Algorithm~\ref{alg:forwarding:merge} on each node in $H$, then $H$ supports integrated, multi-hop, greedy, LPM routing 
with forwarding table size of $(b+1)d$.
\end{claim}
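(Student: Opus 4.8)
The plan is to verify three properties in turn---the table-size bound, the faithful LPM realization of a greedy next-hop, and loop-free termination---treating the arbitrary matching $M$ as the only delicate ingredient. The size bound is immediate: by the earlier observation the backbone $DB(b,d)$ contributes at most $bd$ rows, and since $M$ is a matching every node acquires at most one outgoing \da neighbor, for which one pass of Algorithm~\ref{alg:forwarding:merge} emits $d$ rows (one per suffix length $1,\dots,d$); hence the table has at most $(b+1)d$ rows, and the LPM reduction can only delete rows.

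For routing I would use $\dist(\cdot,t)$ as a potential and first read off what the rows mean. For any out-neighbor $w$ of a node $v$---a shift neighbor $(v_2,\dots,v_d,x)$ or the \da target---the row whose pattern is the length-$i$ suffix of $w$ matches a destination $t$ exactly when that suffix equals the length-$i$ prefix of $t$, and Algorithm~\ref{alg:forwarding:merge} labels it with path-length $d-i+1$; thus the advertised length depends on the prefix length alone and is \emph{monotonically decreasing} in it. Consequently LPM, by selecting the longest matching prefix, picks a neighbor $w^\star$ realizing the largest suffix/prefix overlap $\ell$ of any out-neighbor of $v$ with $t$, and returns $f(v):=d-\ell+1$ as $v$'s advertised distance. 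Two facts then follow by a short computation: (i) the overlap of $w^\star$ with $t$ is exactly $\ell$, so $\dist(w^\star,t)=d-\ell$; and (ii) for every $x\neq t$, writing $\dist(x,t)=d-\lambda_x$ with $\lambda_x$ the longest suffix of $x$ that is a prefix of $t$, the backbone shift neighbor carrying the symbol $t_{\lambda_x+1}$ already has overlap $\lambda_x+1$, so $f(x)\le \dist(x,t)$.

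The heart of the argument---and the step I expect to be the main obstacle, since a \da edge may point to an arbitrary node and could naively increase the distance to $t$---is to show that $f$ strictly decreases along the route, uniformly and without a case split between link types. Applying (ii) at $w^\star$ and then (i), we get $f(w^\star)\le \dist(w^\star,t)=d-\ell=f(v)-1$. Since $f$ is a nonnegative integer that equals $0$ only at the local-delivery row $t=v$, and a matching row always exists (the length-$1$ backbone row of the shift neighbor carrying $t_1$), greedy LPM routing is loop-free and reaches $t$ after at most $f(s)\le d$ hops from any source $s$. Finally, the integrated and multi-hop properties are immediate from the construction: the next-hop decision is local and independent at each node and may alternate freely between backbone and \da ports, so a single greedy route can interleave both link types over several hops.
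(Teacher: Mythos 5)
Your proof is correct and follows essentially the same route as the paper's (much briefer) proof sketch: greedy routing with the \debruijn distance as the potential, where the static shift neighbor guarantees a decrease of at least one per hop so that \da links can only help, and LPM forwarding faithfully realizes this greedy choice. Your write-up fills in details the sketch omits---the suffix/prefix-overlap reading of the table rows, the uniform potential argument, and the explicit $(b+1)d$ size accounting---but it is an elaboration of the same argument, not a different one.
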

\begin{proof}[Proof sketch]
First we show that $H$ supports greedy routing, namely the next-hop is the neighbor with the shortest \debruijn distance to the destination. While greedy routing on the static topology reduces the distance function in each hop by exactly one, \da links can reduce it by more than one hop. From the greedy routing it is clear that LPM forwarding will work and that the path is integrated in a multi-hop manner.
\end{proof}

Figure~\ref{fig:design:dynamic_tables} demonstrates the $DB(2,3)$ topology with the addition of a single demand-aware matching (showing only one \da link from $011$ to $100$).
The figure also presents the new forwarding table at node $011$, which is constructed using Algorithm~\ref{alg:forwarding:merge}.
If we consider as before the route from $s=011$ to $t=001$ 
it will now be shorter $011 \rightarrow 100 \rightarrow 001$. 
In fact, all addresses with LPM $00*$ will use the new \da port for forwarding on node $011$. 
Note also that routes toward addresses with LPM $0**$, like $010$, have now two equal length routes (of length three), for example, $011 \rightarrow 100 \rightarrow 001 \rightarrow 010$ or $011 \rightarrow 110 \rightarrow 101 \rightarrow 010$.

Following Claim~\ref{clm:hybrid}, we can extend this example to more than one matching and support $k_d$ demand-aware matchings. 
Formally, for integers $k_s,k_d,x \ge 2$ and $n=(k_s)^x$, we denote by $\system(n,k_s,k_d)$ the \system topology with $k=k_s+k_d$ spine switches,  backbone network $DB(k_s ,\log_{k_s} n)$, and $k_d$ demand-aware switches.
We can state the following about the hybrid topology of \system.

\begin{theorem}\label{the:duo}
The $\system(n,k_s,k_d)$ topology with $n$ ToRs and $k=k_s + k_d$ spine switches ($k_d, k_s \ge 2$) supports integrated, multi-hop, greedy, LPM routing with forwarding table size of $O(k \log_{k_s} n)$ and diameter $d = O(\log_{k_s} n)$.
\end{theorem}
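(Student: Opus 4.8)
The plan is to derive Theorem~\ref{the:duo} by lifting Claim~\ref{clm:hybrid} from a single demand-aware matching to $k_d$ of them, and then accounting separately for the table size and the diameter. First I would fix the parameters: in $\system(n,k_s,k_d)$ the backbone is $G = DB(k_s, d)$ with $b = k_s$ and $d = \log_{k_s} n$, so that $n = b^d$ agrees with the \debruijn definition, and the $k_d$ demand-aware matchings $M_1, \dots, M_{k_d}$ are added on top to form $H = G \cup M_1 \cup \cdots \cup M_{k_d}$ on the same vertex set. With this setup the theorem splits into three essentially independent claims: (i) $H$ supports integrated, multi-hop, greedy, LPM routing; (ii) the LPM table at every node has $O(k \log_{k_s} n)$ entries; and (iii) the diameter is $O(\log_{k_s} n)$.

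For (i) I would argue by induction on the number of demand-aware matchings, with Claim~\ref{clm:hybrid} serving as the base case $k_d = 1$. The key invariant is that the static backbone $G$ alone always furnishes a \emph{distance-reducing} neighbor: for any current node $v \ne t$, the left-shift appending the appropriate next symbol of $t$ yields a backbone neighbor $w$ with $\dist(w,t) = \dist(v,t) - 1$. Adding matchings only \emph{enlarges} each node's out-neighbor set, so a greedy next-hop — the neighbor minimizing $\dist(\cdot, t)$ — can never do worse than this backbone step, while demand-aware links may reduce the distance by more than one; hence greedy routing always makes progress and terminates in at most $d$ hops. Because the chosen next-hop at each step may be either a backbone or a \da edge, the resulting path is automatically \emph{integrated} and \emph{multi-hop}. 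It then remains to check that Algorithm~\ref{alg:forwarding:merge}, applied per matching and reduced under LPM, faithfully realizes this greedy choice: each added neighbor contributes the $d$ prefix rules produced by the algorithm, tagged with their \debruijn distances (path lengths $1$ through $d$), and the LPM reduction retains, for every prefix, an entry of minimum path length. This is precisely the greedy rule expressed as a forwarding table, and since the argument is local and additive in the matchings, the inductive step is immediate.

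For (ii) I would count contributions. The backbone table has at most $b\,d = k_s \log_{k_s} n$ entries (the table-size observation), and each of the $k_d$ matchings adds at most one outgoing \da link per node, hence at most $d$ prefix entries per node per matching before reduction. Summing gives at most $(k_s + k_d)\,d = k\,d = O(k \log_{k_s} n)$ entries, and LPM reduction can only shrink this. For (iii), monotonicity suffices: since $G \subseteq H$ on the same vertex set, every route available in $G$ remains available in $H$, so the $H$-distance between any pair is at most their distance in $G$; as $DB(k_s, d)$ has diameter at most $d$, the diameter of $H$ is $O(\log_{k_s} n)$.

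The main obstacle I anticipate lies in the LPM bookkeeping within step (i): one must verify that when several demand-aware ports and the backbone port cover the \emph{same} address prefix, the reduction keeps a next-hop of minimum \debruijn distance — and may legitimately keep several such ports for load balancing, as in Figure~\ref{fig:design:dynamic_tables}(c) — without ever deleting the backbone fallback that guarantees progress. Making precise that greedy routing over the enlarged neighbor set is both well-defined (ties broken arbitrarily) and monotone in the number of matchings is the crux of the argument; the table-size and diameter bounds are then routine.
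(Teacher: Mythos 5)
Your proposal takes essentially the same route as the paper: the paper establishes Theorem~\ref{the:duo} by lifting Claim~\ref{clm:hybrid}—whose proof sketch rests on exactly your invariant, namely that the backbone neighbor always reduces the \debruijn distance by one while \da links only enlarge the neighbor set and can reduce it by more, so greedy (and hence LPM) routing remains correct and integrated—from one matching to $k_d$ matchings, with the table-size and diameter bounds following from the same per-neighbor counting and monotonicity you describe. Your write-up merely makes explicit what the paper leaves implicit (the induction over matchings, the $(k_s+k_d)d$ entry count, and the $G \subseteq H$ diameter argument), so it is correct and matches the paper's approach.
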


\subsection{Scheduling of Demand-Aware Links}
\label{subsec:da-sched}

\system relies on a control plane which can use 
centralized or decentralized scheduling of the \da links.
The centralized scheduling benefits from the global view, 
while the decentralized scheduling supports fast reaction.

We use Sirius'~\cite{sirius} reconfiguration model also for \da links: spine switches use passive gratings while (sending) ToR switches rely on tunable lasers which determine the link to set up in the corresponding switch (matching). 
This property is useful for the distributed version of the scheduling where the receivers provide permissions to senders to reconfigure links. 
All algorithms use the command `Set \da-link $(x,y,i)$' which means that sender ToR $x$ tunes its laser on port $i$ to establish a direct link to ToR $y$ via switch $i$.  Recall that each ToR has $k$ up-link toward the $k$ spine switches so we identify port $i$ with switch $i$.

\begin{algorithm}[t]
\caption{Centralized (BFS) \da links setting}
\label{alg:dalinks:centralized:BFS}
\SetKwProg{daconnected}{Function \emph{BFS-\da-links}($D$ - Demand Matrix, $k_d$ - number of \da switches)}{}{end}
\daconnected{} {
    $\Delta$=Largest $k_d n$ demands in $D$, sorted by volume \\
        \ForAll{$(s,t) \in \Delta$ from large to small} {
            $(x, switches) = $ FowardBruijn$(s,t)$ \label{ln:forward}\\
            $(y, i) = $ BackwardBruijnBFS$(s,t, switches)$ \label{ln:backward}\\
            \If{$\dist(s,x) + \dist(y,t) +1 \leq  \mathrm{dist}_{\system}(s,t)$}
            {
            \label{ln:ifBSF}
                Set \da-link $(x,y,i)$ \label{ln:setBFS}\\
            }
        }
}
\SetKwProg{daconnected}{Function \emph{FowardBruijn}$(s,d)$}{}{end}
\daconnected{} {
    $v=s$ \\
    \While{$v != t$}{ 
        \uIf{$v$ has available \da ports}{
            Return $(v, switches)$ \label{ln:foundForward}
        }
        \Else{
            $v =$ next-hop node toward $t$
        }
    }
    Return NULL
}

\SetKwProg{daconnected}{Function \emph{BackwardBruijnBFS}$(s, d, switches)$}{}{end}
\daconnected{} {
    $Q=t$, $i=0$ \\
    \While{$s \not\in Q$}{
        \ForAll{$z \in Q$} { 
            \If{$z$ has available \da ports in $switches$} {
                Return $(z, port)$
            }
        }
        $i = i + 1$ \\
        $Q =$ all nodes $x$ with $\dist(x,t)=i$
    }
    Return NULL
}
\end{algorithm}

\subsubsection{Centralized scheduling of \da-links}

We consider two centralized algorithms for the scheduling of \da-links. 
Both algorithms use greedy heuristics to add shortcuts (\da links) to the backbone \debruijn network. 
Both algorithms work by periodically (with \emph{period} $\update$) determining the new \da-links based on an estimate of the demand or measurement of the traffic in the network. 
We denote this estimation by a \emph{demand matrix} $D$.
Both algorithms sort the demands in $D$ by decreasing order and for each $(s,t)$ demand in $D$, they try to add a \da link to the network. 
In case the algorithm decides to set a \da link, we assume that the \emph{reconfiguration time} is $\reconf$ and during this time the link is not available for use. 
Additionally when a \da link is set, it stays connected for a \emph{reservation time} of $\resrv$ before it can be replaced, if needed. 

The first algorithm Breadth-First-Search \emph{(BFS)-DA-links}, shown in Algorithm~\ref{alg:dalinks:centralized:BFS}, takes a global perspective.
For each demand $(s,t) \in D$ in decreasing order, it searches for the \emph{shortest} possible path that could be created between $s$ and $t$ by adding a shortcut to the \debruijn backbone. 
From the source $s$, the search follows the path on the static topology part toward $t$ until a node with at least one available \da-port has been found (line~\ref{ln:foundForward}), denote it as $x$. The available \da ports of $x$ are denoted as $switches$ (line~\ref{ln:forward}).  
In turn, a destination-based breadth first search is preformed  until a node $y$ with available \da port in $switches$ is found (line~\ref{ln:backward}). We denote the available port in $y$ as $i$. 
If the new path (with the shortcut) is equal or shorter than the greedy path on the current \system topology using static + \da links (line~\ref{ln:ifBSF}), the algorithm creates a shortcut via a \da link between $x$ and $y$ on the $i$'th spine switch (line~\ref{ln:setBFS}).
Note that initially $x=s$ and $y=t$, but at a later stages of the algorithm it will create integrated multi-hop paths in \system.

The second centralized algorithm, \emph{Greedy-\da-links}, shown in Algorithm~\ref{alg:dalinks:centralized:greedy}, is a simple version of greedy $k$-matchings (known also as $b$-matching for undirected graphs~\cite{gabow2018data}). The algorithm iterates over the requests $(s,t) \in D$ in decreasing order and only connects a direct link between $s$ and $t$ if they have available ports on the same \da switch $i$. 
The greedy matching is a simplified version of the \emph{BFS-DA-links} algorithm, nevertheless, its shortcuts also support integrated multi-hop as before, and a similar version of it is easier to implemented in a distributed way, as we explain next. 

\begin{algorithm}[t]
\caption{Centralized (Greedy) \da links setting}
\label{alg:dalinks:centralized:greedy}
\SetKwProg{daconnected}{Function \emph{Greedy-\da-links}($D$ - Demand Matrix, $k_d$ - number of \da switches)}{}{end}
\daconnected{} {
    $\Delta$=Largest $k_d n$ demands in $D$, sorted by volume \\
        \ForAll{$(s,t) \in \Delta$ from large to small} {
            \If{$s, t$ have available \da ports in switch $i$}{
                Set \da-link $(s,t, i)$ \\
            }
        }
}
\end{algorithm}

\subsubsection{Distributed scheduling of \da-links}

The distributed scheduling algorithm, \emph{Dist\da}, 
shown in Algorithm~\ref{alg:dalinks:distributed}, combines similar approaches as presented in ProjecToR~\cite{projector} and Sirius~\cite{sirius}. 
It implements a distributed, threshold-based greedy $k$ matchings algorithm.
The algorithm is triggered by destination-based elephant detection of flows from a source.
For instance, this can be done in P4 using sketches~\cite{namkung2022sketchlib} as we discuss in more details later.
If any of the destinations detects a source(-ToR) as elephant it checks if it has available \da-ports.
If available, it sends an offer, \emph{PortRequest($ports$)}, to the elephant  source ToR via the static topology part where $ports$ is a list of available ports.
Upon reception, the source/sender checks for an available \da-port on its side. 
If a port is available, it acknowledges the request via a \emph{PortApprove($i$)} message and set the link on port $i$.
If no \da-port is available at the source, the request is declined.
The receiver ToR continues to generate \emph{PortRequests} for other elephants.
An agreed \da-link, i.e., the ports at sending and receiving ToR, is reserved for fixed period of time $\resrv$.
Afterwards, the ports can be assigned to new requests and the circuit might be reconfigured.
However, the circuit is \emph{not} pro-actively torn down but kept alive until an appropriate request arrives.

\begin{algorithm}[t]
\caption{Distributed \da-link scheduling}
\label{alg:dalinks:distributed}
\SetKwProg{daconnected}{Function \emph{Dist\da}() at destination $t$}{}{end}
\daconnected{} {
    Upon detection of \emph{elephant flow} from source $s$ \\
    \If{$t$ has available \da ports}{
        Send \emph{PortRequest(ports)} to node $s$\\
        \If{$s$ reply with \emph{PortApprove(i)}}{
            \da-link $(s,t,i)$ is set (with timeout)\\
        }
    }
}
\SetKwProg{daconnected}{Function \emph{Dist\da}() at source $s$}{}{end}
\daconnected{} {
    Upon \emph{PortRequest(ports)} from destination $t$ \\
    \If{$s$ has available \da ports in $ports$}{
        Send \emph{PortApprove(i)} to node $t$\\
        Set  \da-link $(s,t,i)$ (with timeout)\\
    }
    \Else{Send \emph{DeclineRequest}}
}
\end{algorithm}

We note that while our distributed scheduler is simple, it is effective as we will see next.
We leave the study of more sophisticated schedulers (e.g., based on distributed stable matchings~\cite{projector} or
online algorithms~\cite{kalyanasundaram1993online}) for future work.

\subsection{Implementation and Practical Aspects}

\subsubsection{Implementation and Cost}
As mentioned earlier, we envision  that \system could be implemented  using the Sirius architecture~\cite{sirius}. 
Sirius is also captured by the TMT model, but one of its great advantages is that instead of spine switches, Sirius uses a single layer of $k$ gratings.
The Arrayed Wavelength Grating Routers (AWGR) are simple and \emph{passive} without moving parts and do not consume power. 
Still, each grating diffracts (``forwards'') incoming light from input to output ports, based on the  wavelength, abstractly creating a matching. Reconfiguration is then performed by a physical-layer ToR switch (or directly on servers) equipped with $k$ transceivers containing tunable lasers that can change the wavelength used to carry the data toward the gratings through an optical fiber.

Sirius has been presented as a demand-oblivious architecture which provides fast end-to-end reconfiguration, due to a pre-determined, static schedule that specifies the connectivity at any given fixed-size timeslot. However, Sirius' architecture is in principle also well-suited for demand-aware scheduling, with a slower end-to-end reconfiguration delay. 

As \system differs from Sirius only in the scheduling and routing, the cost and power consumption of \system will be similar to Sirius. 
In~\cite{sirius}, the authors showed that Sirius’ power and cost are about 25\% that of an electrically switched Clos network (ESN). 
That said, unfortunately, a direct comparison of the performance of \system and Sirius is currently not possible as Sirius' simulation code is not available, we therefore concentrate on the comparison to static expander topologies which are also state-of-the-art datacenter networks~\cite{xpander}.

Recently, Cerberus~\cite{sigmetrics22cerberus} which can potentially also be built on the Sirius architecture, demonstrated that using 1-hop \da links (and keeping some demand-oblivious dynamic links as in Sirius or RotorNet~\cite{rotornet}) can increase the network throughput. We, therefore, believe that besides the conceptual contribution of \system, in terms of performance it could enhance any demand-oblivious existing design.

\subsubsection{IP Addressing and LPM Forwarding}\label{sec:design:addressing}

We embed the \debruijn address into the hosts' IP addresses.
Our approach uses IPv4 but can also be implemented using IPv6.
Depending on the number of ports per ToR, a single symbol of the \debruijn address takes one or multiple bits of the IP address:
Thus, the full \debruijn address occupies $s\cdot d$ bits of the IP address.
In order to use LPM to implement the forwarding, we split the IP address into three parts.
The first $p$ bits mark the base network that is assigned to \system.
The following $s'=s\cdot d$ bits identify the ToR by means of the \debruijn address while the remaining bits identify the host/VM inside the rack, i.e., each ToR is assigned a $/(p+s')$ prefix.

For the example of Figure~\ref{fig:design:dynamic_tables}(a), the \debruijn address can directly be mapped to an IP address/prefix and occupies only $3$ address bits.
Using $10.0.0.0/8$ as a base IP prefix, an exemplary forwarding table for ToR $5=011$ is shown in Figure~\ref{fig:design:dynamic_tables}(c).
Following Algorithm~\ref{alg:forwarding:merge} each node can build its IP forwarding table locally based on its ToR neighbors' addresses.
In particular, when a new \da links is established for a node's port and it knows the ToR address of the new neighbor, the forwarding table can be updated locally (without recomputing shortest paths).

\section{Conclusion}\label{sec:conclusion}
To address the limitations and overheads of existing reconfigurable
datacenter networks, we proposed  \system, a simple and flexible architecture 
which supports integrated multi-hop routing and demand-aware links.
\system is work conserving and enables fast topology updates and 
simple control. 
We argued that a realization of \system using a Sirius topology reconfiguration
model may be particularly interesting. 

We understand our work as a next step towards more practical and scalable demand-aware
reconfigurable datacenter networks, and believe that our work opens several interesting
avenues for future research. In particular, while we demonstrated the benefits of greedy and local routing on a \debruijn topology, we believe that our approach is more general and applicable to other network topologies that support greedy local routing. 

\begin{acks}
Research supported by the European Research Council (ERC),  consolidator project Self-Adjusting Networks (AdjustNet), grant agreement No. 864228, Horizon 2020, 2020-2025. The work was also funded by the Deutsche Forschungsgemeinschaft (DFG, German Research Foundation) - 438892507.
\end{acks}

\label{lastmainpage}
{\balance
  \bibliographystyle{ieeetr} 
\bibliography{literature}
}
\label{lastpage}
\end{document}